\newtheorem{theorem}{Theorem}
\newtheorem{corollary}{Corollary}
\newcommand{\figwidth}{9}
\begin{document}

\title{Hierarchical Codebook Design for Beamforming Training in Millimeter-Wave Communication}

\author{Zhenyu Xiao,~\IEEEmembership{Member,~IEEE,}
Tong He,
        Pengfei Xia,~\IEEEmembership{Senior Member,~IEEE,}
and Xiang-Gen Xia,~\IEEEmembership{Fellow,~IEEE}
\thanks{This work was partially supported by the National Natural Science Foundation of China (NSFC) under grant Nos. 61571025, 91338106, 91538204, and 61231013, National Basic Research Program of China under grant No.2011CB707000, and Foundation for Innovative Research Groups of the National Natural Science Foundation of China under grant No. 61221061.}
\thanks{Z. Xiao and T. He are with the School of
Electronic and Information Engineering, Beijing Key Laboratory for Network-based Cooperative Air Traffic Management, and Beijing Laboratory for General Aviation Technology, Beihang University, Beijing 100191, P.R. China.}
\thanks{P. Xia is with the School of Electronics and Information Engineering, Tongji University, Shanghai, P.R. China.}
\thanks{X.-G. Xia is with the Department of Electrical and Computer Engineering, University of Delaware, Newark, DE 19716, USA.}

\thanks{Corresponding Author: Dr. Z. Xiao with Email: xiaozy@buaa.edu.cn.}
}
\markboth{IEEE TRANSACTIONS ON WIRELESS COMMUNICATIONS, VOL. X, NO.
x, xxx 20xx}{Shell \MakeLowercase{\textit{et al.}}}

\maketitle
\begin{abstract}
In millimeter-wave communication, large antenna arrays are required to achieve high power gain by steering towards each other with narrow beams, which poses the problem to efficiently search the best beam direction in the angle domain at both Tx and Rx sides. As the exhaustive search is time consuming, hierarchical search has been widely accepted to reduce the complexity, and its performance is highly dependent on the codebook design. In this paper, we propose two basic criteria for the hierarchical codebook design, and devise an efficient hierarchical codebook by jointly exploiting sub-array and deactivation (turning-off) antenna processing techniques, where closed-form expressions are provided to generate the codebook. Performance evaluations are conducted under different system and channel models. Results show superiority of the proposed codebook over the existing alternatives.
\end{abstract}

\begin{IEEEkeywords}
Millimeter wave communication, mmWave, beamforming, codebook design, hierarchial search.
\end{IEEEkeywords}

\section{Introduction}
\IEEEPARstart{M}{illimeter-wave} (mmWave) communication is a promising technology for next-generation wireless communication owing to its abundant frequency spectrum resource, which promises a much higher capacity than the existing wireless local area networks (WLANs) and the current cellular mobile communication. In fact, mmWave communication has received increasing attentions as an important candidate technology in both the next-generation WLANs \cite{Rapp_2010_60GHz_general,wang_2011_MMWCS,Park_2010_11ad,Xia_2011_60GHz_Tech,xiaozhenyu2013div,xia_2008_prac_ante_traning,xia_2008_multi_stage} and mobile cellular communication \cite{khan_2011,alkhateeb2014mimo,choi2014coding,han2015large,roh2014millimeter,sun2014mimo,niu2015survey,wang2014tens,wang2015multi}.
A fundamental challenge to mmWave communication is the extremely high path loss, thanks to the very high carrier frequency on the order of 30-60 GHz. To bridge this significant link budget gap, joint Tx/Rx beamforming is usually required to bring large antenna array gains, which typically requires a large Tx/Rx antenna array size (e.g., an antenna array size of 36 is used in \cite{Xia_2011_60GHz_Tech}.). Fortunately, thanks to the small wavelength on the mmWave frequency, large antenna arrays are possible to be packed into a small area.

 In the mmWave domain, the high power consumption of mixed signal components, as well as expensive radio-frequency (RF) chains, make it difficult, if not impossible, to realize digital baseband beamforming as used in the conventional multiple-input multiple-output (MIMO) systems. Instead, analog beamforming is usually preferred, where all the antennas share a single RF chain and have constant-amplitude (CA) constraint on their weights \cite{Xia_2011_60GHz_Tech, xia_2008_prac_ante_traning,wang_2009_beam_codebook,wang_2009_beam_codebook_vtc}. Meanwhile, a hybrid analog/digital precoding structure was also proposed to realize multi-stream/multi-user transmission \cite{alkhateeb2014mimo,roh2014millimeter,sun2014mimo}, where a small number of RF chains are tied to a large antenna array. No matter whether the analog beamforming or the hybrid precoding structure is exploited, entry-wise estimation of channel state information (CSI) is time costly due to large-size antenna arrays, and a more efficient antenna training algorithm is needed.

 For the hybrid precoding structure, as the mmWave channel is generally sparse in the angle domain, different compressed sensing (CS) based channel estimation methods were proposed to estimate the steering angles of multipath components (MPCs) \cite{alkhateeb2014channel,alkhateeb2014mimo,alkhateeb2015compressed,peng2015enhanced,wang2015multi}, where \cite{alkhateeb2014channel} is for point-to-point multi-stream transmission, \cite{alkhateeb2015compressed} is for multi-user transmission, while \cite{peng2015enhanced}, based on a presentation of antenna array with
virtual elements, further enhances the channel estimation over \cite{alkhateeb2014channel}. For the analog beamforming structure, there in general exist two different approaches. In \cite{Xia_2011_60GHz_Tech,xia_2008_prac_ante_traning,xia_2008_multi_stage,Xiaozy2014BeamTrain}, an iterative beamforming training approach is adopted, in which the beamforming vector on one side (transmitter or receiver) is alternatively optimized by fixing the beamforming vector on the other side, and the alternation is repeated iteratively to improve the beamforming gain one iteration upon another. On the other hand, in \cite{wang_2009_beam_codebook,wang_2009_beam_codebook_vtc,kokshoorn2015fast}, a switched beamforming approach is adopted, where the beam search space (at the transmitter and receiver side, respectively) is represented by a codebook containing multiple codewords, and the best transmit/receive beams are found by searching through their respective codebooks. Both approaches have their own merit and may be useful in different applications.



In this paper, we focus on switched beamforming for one-stream transmissions. This is motivated by the fact that single-stream beamforming is actually capacity achieving in the very-low SNR case \cite{TseFundaWC}. Furthermore, single stream beamforming can be readily extended to the more complicated hybrid precoding case \cite{alkhateeb2014channel}. For switched beamforming, an exhaustive search algorithm may be used, which sequentially tests all the beam directions in the angle domain and finds the best pair of transmit/receive beamforming codewords. This is conceptually straightforward. However, the overall search time is prohibitive, because the number of candidate beam directions is usually large for mmWave communication. To improve the search efficiency, a hierarchy of codebooks may be defined \cite{wang_2009_beam_codebook,wang_2009_beam_codebook_vtc,chen2011multi,Park_2010_11ad,hur2013millimeter,he2015suboptimal}. For example, a coarse codebook may be defined with a small number of coarse/low-resolution beams (or sectors) covering the intended angle range, while a fine codebook may be defined with a large number of fine/high-resolution beams covering the same intended angle range, and that a coarse beam may have the same/similar coverage as that of multiple fine beams together. A divide-and-conquer search may then be carried out across the hierarchy of codebooks, by finding the best sector (or coarse beam) first on the low-resolution codebook level, and then the best fine beam on the high-resolution codebook level, with the best high-resolution beam contained by the best low-resolution beam.


Performances of the switched beamforming schemes, including the search time and success rate, are highly dependent on the hierarchical codebook design. In \cite{wang_2009_beam_codebook,wang_2009_beam_codebook_vtc}, although wider beams were proposed to speed up beam search, design approaches to broaden the beams were not studied. In \cite{chen2011multi}, codewords with wider beams were generated by summing two codewords with narrower beams, but the CA constraint was no longer satisfied. In \cite{hur2013millimeter}, a sub-array method was proposed to broaden the beam width by pointing the sub-array beams in slightly-gapped directions. However, a full hierarchical codebook was not explicitly designed therein, and this approach may be not feasible to design very wide or even omni-directional beams. In \cite{alkhateeb2014channel}, the hybrid precoding structure was adopted to shape wider beams by exploiting the sparse reconstruction approach, but high-quality wide beams can be shaped only when the number of RF chains is large enough and deep sinks within the angle range appear otherwise. In \cite{he2015suboptimal}, a binary-tree structured hierarchical codebook was designed by using antenna deactivation, where wider beams were generated by turning off part of the antennas. A complete codebook was designed with closed-form expressions provided therein. However, the number of active antennas is too small for very wide or omni-directional beams, which may limit its application in mmWave communication, where per-antenna transmission power is limited.

In this paper, we first propose two basic criteria for arbitrary hierarchical codebook designs, and then devise an efficient hierarchical codebook by jointly exploiting sub-array and deactivation (turning-off) antenna processing techniques. Closed-form expressions are provided to generate the codebook. In the proposed approach, the beams of the sub-arrays steer towards widely-gapped directions to broaden beams, which is essentially different from \cite{hur2013millimeter}, and the deactivation operates on the sub-arrays instead of individual antennas like that in \cite{he2015suboptimal}. To the best of our knowledge, this is the first to propose these two criteria and the joint sub-array and deactivation codebook design. Performance evaluations are conducted under both line-of-sight (LOS) and non-LOS (NLOS) channels, as well as with both total and per-antenna transmission power models. Results show superiority of the proposed codebook over the existing alternatives, especially when the per-antenna transmit power is constrained.

The rest of this paper is as follows. In Section II, the system and channel models are introduced. In Section III, the hierarchical codebook design is presented. In Section IV, performance evaluation is conducted. The conclusion is drawn lastly in Section V.

Symbol Notations: $a$, $\mathbf{a}$, $\mathbf{A}$, and $\mathcal{A}$ denote a scalar variable, a vector, a matrix, and a set, respectively. $(\cdot)^{\rm{*}}$, $(\cdot)^{\rm{T}}$ and $(\cdot)^{\rm{H}}$ denote conjugate, transpose and conjugate transpose, respectively. $\mathbb{E}(\cdot)$ denotes expectation operation. $[\mathbf{a}]_i$ and $[\mathbf{A}]_{ij}$ denote the $i$-th entry of $\mathbf{a}$ and the $i$-row and $j$-column entry of $\mathbf{A}$, respectively. $[\mathbf{a}]_{i:j}$ denotes a vector with entries being the $i$-th to $j$-th entries of $[\mathbf{a}]$. $|\cdot|$ and $\|\cdot\|$ denote the absolute value and two-norm, respectively.

\section{System and Channel Models}
\subsection{System Model}
Without loss of generality, we consider an mmWave communication system with
half-wave spaced uniform linear arrays (ULAs) of $N_{\rm{T}}$ and $N_{\rm{R}}$ elements equipped
at the transmitter and receiver, respectively \cite{wang_2009_beam_codebook,xiaozhenyu2013div,xiaozhenyuGC2013,xiao2015Iterative}, as shown in Fig. \ref{fig:system}, where a single RF chain is tied to the ULA at both the transmitter and receiver, and thus the analog beamforming structure is adopted. At the transmitter, each antenna branch has a phase shifter and power amplifier (AP) to drive the antenna, while at the receiver, each antenna branch has a low-noise amplifier (LNA) to amplify the signal and a phase shifter. It is noted that as the analog beamforming can be seen as one of the branches of the hybrid precoding structure, the proposed criteria and codebook design can be naturally used by the hybrid precoding structure, which will be shown in Section III-D. Additionally, in our system each antenna branch can be deactivated or turned off, i.e., there is a switch in each antenna branch at both sides although not depicted in the figure. Generally, all the PAs, as well as the LNAs, have the same scaling factor if activated. Thus, each element of the antenna weight vectors (AWVs) at the both sides either has a constant amplitude or is zero.

\begin{figure}[t]
\begin{center}
  \includegraphics[width=\figwidth cm]{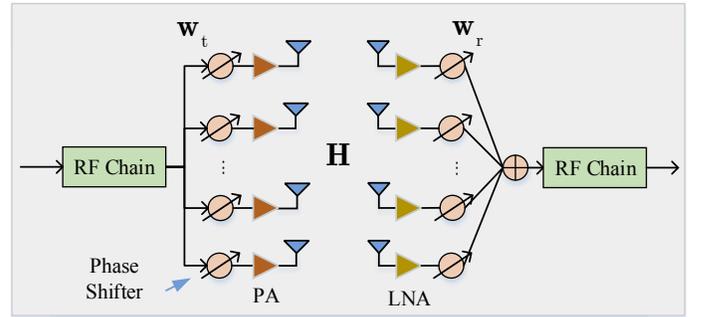}
  \caption{Illustration of the system.}
  \label{fig:system}
\end{center}
\end{figure}

Letting $s$ denote the transmitted symbol with unit power, the received signal is
\begin{equation} \label{eq_signal_totalp}
y = \sqrt{P_{\rm{tot}}} {\bf{w}}_{\rm{R}}^{\rm{H}}{\bf{H}}{{\bf{w}}_{\rm{T}}}s + {\bf{w}}_{\rm{R}}^{\rm{H}}{\bf{n}},
\end{equation}
where $P_{\rm{tot}}$ is the \emph{total transmission power} of all the active antennas, ${\bf{w}}_{\rm{T}}$ and ${\bf{w}}_{\rm{R}}$ are the transmit and receive AWVs, respectively, ${\bf{H}}$ is the channel matrix, $\bf{n}$ is the Gaussian noise vector with power $N_0$, i.e., $\mathbb{E}({\bf{n}}{\bf{n}}^{\rm{H}})={N_0\bf{I}}$. Let ${\cal W}(N)$ denote a set of vectors with $N$ entries as shown in \eqref{eq_setW}, where $\nu$ is a normalization factor such that all the vectors have \emph{unit power}. We can find that each entry of an arbitrary vector in ${\cal W}(N)$ has either an amplitude $\nu$ (activated) or is 0 (deactivated). Consequently, we have ${{\bf{w}}_{\rm{T}}} \in {\cal W}({N_{\rm{T}}})$, and ${{\bf{w}}_{\rm{R}}} \in {\cal W}({N_{\rm{R}}})$. It is noted that this signaling is based on the total transmission power, and we can further define the total transmission SNR as $\gamma_{\rm{tot}}=P_{\rm{tot}}/N_0$, and the received SNR with the total transmission power model as
\begin{equation}
\eta_{\rm{tot}}  = \gamma_{\rm{tot}}|{\bf{w}}_{\rm{R}}^{\rm{H}}{\bf{H}}{{\bf{w}}_{\rm{T}}}{|^2}.
\end{equation}

\begin{figure*}
\begin{equation} \label{eq_setW}
{\cal W}(N) = \{ \nu{[{\beta _1}{e^{{\rm{j}}{\theta _1}}},{\beta _2}{e^{{\rm{j}} {\theta _2}}},...,{\beta _N}{e^{{\rm{j}}{\theta _N}}}]^{\rm{T}}}\big{|}{\beta _i} \in \{ 0,1\} ,{\theta _i} \in [0,2\pi ),i = 1,2,...,N\}
\end{equation}
\hrulefill
\end{figure*}

The power gain under this model is
\begin{equation} \label{eq_power_gain_per}
G_{\rm{tot}}=\frac{\eta_{\rm{tot}}}{\gamma_{\rm{tot}}}=|{\bf{w}}_{\rm{R}}^{\rm{H}}{\bf{H}}{{\bf{w}}_{\rm{T}}}{|^2},
\end{equation}
which is also the array gain.

On the other hand, in mmWave communication the scaling abilities of PAs are generally limited. Thus, a per-antenna transmission power model is also with significance to characterize the best transmission ability of the transmitter, which is shown as
\begin{equation} \label{eq_signal_perp}
y = \sqrt {P_{\rm{per}}N_{\rm{Tact}}} {\bf{w}}_{\rm{R}}^{\rm{H}}{\bf{H}}{{\bf{w}}_{\rm{T}}}s + {\bf{w}}_{\rm{R}}^{\rm{H}}{\bf{n}},
\end{equation}
where $P_{\rm{per}}$ is the per-antenna transmission power, $N_{\rm{Tact}}$ is the number of active antennas of ${{\bf{w}}_{\rm{T}}}$, which varies as different ${{\bf{w}}_{\rm{T}}}$. Also, we have ${{\bf{w}}_{\rm{T}}} \in {\cal W}({N_{\rm{T}}})$, and ${{\bf{w}}_{\rm{R}}} \in {\cal W}({N_{\rm{R}}})$. In addition, the per-antenna transmission SNR is defined as $\gamma_{\rm{per}}=P_{\rm{per}}/N_0$, and the received SNR with the per-antenna transmission power model is defined as
\begin{equation}
\eta_{\rm{per}}  = \gamma_{\rm{per}}N_{\rm{Tact}}|{\bf{w}}_{\rm{R}}^{\rm{H}}{\bf{H}}{{\bf{w}}_{\rm{T}}}{|^2}.
\end{equation}

The power gain under this model is
\begin{equation}
G_{\rm{per}}=\frac{\eta_{\rm{per}}}{\gamma_{\rm{per}}}=N_{\rm{Tact}}|{\bf{w}}_{\rm{R}}^{\rm{H}}{\bf{H}}{{\bf{w}}_{\rm{T}}}{|^2},
\end{equation}
 which includes both the transmission power gain equal to the number of active antennas $N_{\rm{Tact}}$ and the array gain $|{\bf{w}}_{\rm{R}}^{\rm{H}}{\bf{H}}{{\bf{w}}_{\rm{T}}}{|^2}$.

It is worth mentioning that the total and per-antenna transmission power models are suitable for the cases that the scaling abilities of PA are high enough and limited, respectively. However, there is no difference for codebook design between with these two models.

\subsection{Channel Model}
Since mmWave channels are expected to have limited scattering \cite{rapp_2013_MMW,Rapp_2012_cellular_MMW,sayeed_2007,xiao2015Iterative,alkhateeb2014channel}, MPCs are mainly generated by reflection. That is, mmWave channels have the feature of directionality. Different MPCs have different physical transmit steering angles and receive steering angles, i.e., physical angles of departure (AoDs) and angles of arrival (AoAs). Consequently, mmWave channels are relevant to the geometry of antenna arrays. With half-spaced ULAs adopted at the transmitter and receiver, the channel matrix can be expressed as \cite{xiao2015Iterative,he2015suboptimal,alkhateeb2014channel,Ayach2014,hur2013millimeter,nsenga_2009}
\begin{equation} \label{eq_Channel}
{\bf{H}} = \sqrt {{N_{\rm{T}}}{N_{\rm{R}}}} \sum\limits_{\ell  = 1}^L {{\lambda _\ell }{\bf{a}}({N_{\rm{R}}},{\Omega _\ell }){\bf{a}}{{({N_{\rm{T}}},{\psi _\ell })}^{\rm{H}}}},
\end{equation}
where $\lambda_\ell$ is the complex coefficient of the $\ell$-th path, $L$ is the number of MPCs, ${\bf{a}}(\cdot)$ is the \emph{steering vector function}, ${\Omega _\ell }$ and ${\psi _\ell }$ are cos(AoD) and cos(AoA) of the $\ell$-th path, respectively. Let ${\theta _\ell }$ and ${\varphi _\ell }$ denote the \emph{physical AoD and AoA} of the $\ell$-th path, respectively; then we have ${\Omega _\ell } = \cos ({\theta _\ell })$ and ${\psi _\ell } = \cos ({\varphi _\ell })$. Therefore, ${\Omega _\ell }$ and ${\psi _\ell }$ are within the range $[-1~1]$. For convenience, in the rest of this paper, ${\Omega _\ell }$ and ${\psi _\ell }$ are called AoDs and AoAs, respectively. Similar to \cite{xiao2015Iterative,alkhateeb2014channel}, $\lambda_\ell$ can be modeled to be complex Gaussian distributed, while ${\theta _\ell }$ and ${\varphi _\ell }$ can be modeled to be uniformly distributed within $[0,2\pi]$. ${\bf{a}}(\cdot)$ is a function of the number of antennas and AoD/AoA, and can be expressed as
\begin{equation}
\begin{aligned}
{\bf{a}}(N,\Omega ) =\frac{1}{{\sqrt N }}[e^ {{\rm{j}}\pi 0\Omega},~e^{ {\rm{j}}\pi 1\Omega },...,e^{{\rm{j}}\pi (N - 1)\Omega}]^{\rm{T}},
\end{aligned}
\end{equation}
where $N$ is the number of antennas ($N$ is $N_{\rm{T}}$ at the transmitter and $M_{\rm{R}}$ at the receiver), $\Omega$ is AoD or AoA. It is easy to find that ${\bf{a}}(N,\Omega )$ is a periodical function which satisfies ${\bf{a}}(N,\Omega )={\bf{a}}(N,\Omega +2)$. The channel matrix ${\bf{H}}$ also has power normalization
\begin{equation}
\sum_{\ell=1}^{L}\mathbb{E}(|\lambda_\ell|^2)=1.
\end{equation}

\subsection{The Problem}
From a system level, joint Tx/Rx beamforming is required to maximize the received SNR, i.e.,
\begin{equation} \label{eq_beamforming}
\begin{aligned}
{\rm{Maximize}}~~~~&\eta_{\rm{tot}}  = \gamma_{\rm{tot}}|{\bf{w}}_{\rm{R}}^{\rm{H}}{\bf{H}}{{\bf{w}}_{\rm{T}}}{|^2}~\rm{or}\\
&\eta_{\rm{per}}  = \gamma_{\rm{per}}N_{\rm{Tact}}|{\bf{w}}_{\rm{R}}^{\rm{H}}{\bf{H}}{{\bf{w}}_{\rm{T}}}{|^2},\\
{\rm{Subject\;to}}~~~~&{{\bf{w}}_{\rm{R}}} \in {{\cal W}_{\rm{R}}},{{\bf{w}}_{\rm{T}}} \in {{\cal W}_{\rm{T}}}.
\end{aligned}
\end{equation}

Clearly, if ${\bf{H}}$ is known at the transmitter and receiver, and there is no CA constraint, the optimal ${\bf{w}}_{\rm{T}}$ and ${\bf{w}}_{\rm{R}}$ can be easily solved by singular value decomposition (SVD). However, in mmWave communication it is too time costly for entry-wise estimation of the channel matrix, which has a large scale due to large antenna arrays, and there exists the CA constraint. Thus, the SVD approach is basically not feasible for mmWave communication.

Fortunately, according to \eqref{eq_Channel} the mmWave channel has uncertainty mainly on AoDs/AoAs at both sides. In such a case, the one-stream beamforming problem in \eqref{eq_beamforming} can be simplified to find the AoD/AoA of an arbitrary strong MPC (or better the strongest MPC)\footnote{Basically, under LOS channel, where there is a LOS component significantly stronger than the other MPCs, the best MPC (the LOS component) needs to be found; while under NLOS channel, where all the MPCs have similar strengths, an arbitrary strong MPC can be feasible.}, and set ${\bf{w}}_{\rm{T}}$ and ${\bf{w}}_{\rm{R}}$ as the Tx/Rx steering vectors pointing to these AoD/AoA.

To this end, a straightforward way is to evenly sample the angle domain $[-1,1]$ with a small interval, e.g., $1/N$ for an $N$ antenna-array, and sequentially test all these sampled angles with corresponding steering vectors at both sides. This is the exhaustive search method. Clearly the codebooks for exhaustive search are composed by only steering vectors. Although exhaustive search is feasible and can always find the best MPC, it has a time complexity $\mathcal{O}(N^2)$ \cite{libin2013,he2015suboptimal}, which is too high for large arrays. Thus, hierarchical search is widely used to reduce the search time. In fact, the search time is highly dependent on the design of the Tx/Rx codebooks, which are subsets of ${{\cal W}_{\rm{T}}}$ and ${{\cal W}_{\rm{R}}}$, respectively. Hence, we focus on the codebook design in this paper.

\section{Hierarchical Codebook Design}
In this section, we design a hierarchical codebook composed by codewords (or AWVs) with different beam widths, which helps the search efficiency in finding the steering vectors of a strong or the strongest MPC at both sides. It is noted that, based on the specific structure of the mmWave channel model, the codebook design is to establish a relationship between the codewords in the angle domain to speed up the beam search. Thus, the codebook design is in fact irrelevant to the instantaneous channel response. When beamforming is required in practice before data transmission, a beam search process needs to be launched based on the designed codebook to find the suitable beamforming weights (steering vectors) for a given channel. For different channels, the codebook is the same, but the searched optimal steering vectors are different depending on the channel responses.

Although several hierarchial search schemes have been proposed for beam search in both literatures \cite{chen2011multi,hur2013millimeter,he2015suboptimal} and some standards, like IEEE 802.15.3c and IEEE 802.11ad \cite{wang_2009_beam_codebook,wang_2009_beam_codebook_vtc,Park_2010_11ad}, to the best of our acknowledge, there are no criteria proposed to judge whether a codebook is suitable or not, and there are few complete hierarchical codebooks with closed-form expressions provided for mmWave communication. Therefore, in this section, we first propose two basic criteria to design a hierarchical codebook, and then design one jointly using sub-array and deactivation techniques based on the proposed criteria.

\subsection{Two Criteria}
Before introducing the two criteria, we first introduce two definitions here. Let $A({\bf{w}},\Omega )$ denote the beam gain of ${\bf{w}}$ along angle $\Omega$, which is defined as
\begin{equation} \label{eq_beam_gain}
A({\bf{w}},\Omega ) = \sqrt N {\bf{a}}{(N,\Omega )^{\rm{H}}}{\bf{w}} = \sum\limits_{n = 1}^N {{{[{\bf{w}}]}_n}{e^{ - {\rm{j}}\pi (n - 1)\Omega }}},
\end{equation}
where $N$ is the number of elements of ${\bf{w}}$.

Let ${\cal C}{\cal V}({\bf{w}})$ denote the beam coverage in the angle domain of AWV ${\bf{w}}$, which can be mathematically expressed as
\begin{equation} \label{eq_beam_coverage}
{\cal C}{\cal V}({\bf{w}}) =\left \{ {\Omega}\big{|}\,|A({\bf{w}},{\Omega })| > \rho \mathop {\max }\limits_\omega  |A({\bf{w}},\omega )|\right\},
\end{equation}
where $\rho$ is a factor within $(0,1)$ to determine the beam coverage of ${\bf{w}}$. It is easy to find that the coverage is smaller as $\rho$ becomes greater. When $\rho=1/\sqrt{2}$, the beam coverage is the 3dB coverage, and the beam width is the well-known 3dB beam width. Different codebook design methods may have different values of $\rho$, and codewords with different beam widths in the same codebook may also have different values of $\rho$.

Hierarchical search is simply \emph{layered} search, i.e., the AWVs within the codebook are \emph{layered} according to their beam width. AWVs with a lower layer have larger beam width. Letting ${\bf{w}}(k,n)$ denote the $n$-th codeword (or AWV) in the $k$-th layer, the two criteria are presented as follows.

\textbf{Criterion 1:} The union of the beam coverage of all the codewords within each layer should cover the whole angle domain, i.e.,
\begin{equation}
\bigcup_{n = 1}^{{N_k}}{\cal C}{\cal V}({\bf{w}}(k,n)) = [ - 1,1],~k = 0,1,...,K,
\end{equation}
where $N_k$ is the number of codewords in the $k$-th layer, $K$ it the maximal index of the layer (there are $K+1$ layers in total).

\textbf{Criterion 2:} The beam coverage of an arbitrary codeword within a layer should be covered by the union of those of several codewords in the next layer, i.e.,
\begin{equation}
{\cal C}{\cal V}({\bf{w}}(k,n)) \subseteq  \bigcup _{m \in {\mathcal{I}_{k,n}}}{\cal C}{\cal V}({\bf{w}}(k + 1,m)), ~k = 0,1,...,K-1,
\end{equation}
where ${\mathcal{I}_{k,n}}$ is the index set with indices of the codewords in the $(k+1)$-th layer for the $n$-th codeword in the $k$-th layer. For convenience, we call ${\bf{w}}(k,n)$ a parent codeword, and $\{{\bf{w}}(k + 1,m)|m\in {\mathcal{I}_{k,n}}\}$ the child codewords of ${\bf{w}}(k,n)$.

It is clear that Criterion 1 guarantees the full coverage, i.e., there is no miss of any angle during the beam search, while Criterion 2 establishes a tree-fashion relationship between the codewords, which enables hierarchical search. If each parent codeword has $M$ child codewords, all the codewords in the codebook constitute an $M$-way tree with respect to their beam coverage in the angle domain. In such a case, hierarchical search can be easily realized by using the tree search algorithm in both the receiver and transmitter as following \cite{alkhateeb2014channel,he2015suboptimal}.

\emph{The Hierarchical Search:} Initially, we fix the transmitter to be in an omni-directional mode, and run an $M$-way tree search for $\log_M(N_{\rm{R}})$ stages to find the best receive codeword. And then we fix the receiver to be in a directional mode with the found best receive codeword, and then run an $M$-way tree search for $\log_M(N_{\rm{T}})$ stages to find the best transmit codeword. In each stage, we have $M$ candidate codewords, which are the $M$ child codewords of a parent codeword found in the last stage. We need to test all the $M$ codewords one by one to find the best one, and treat it as a new parent codeword for the next-stage search. Therefore, the search time (number of tests) for Tx/Rx joint training is
\begin{equation}
T=M\log_M{N_{\rm{T}}}+M\log_M{N_{\rm{R}}}.
\end{equation}

 In the next subsection we will design a codebook with $M=2$, for the reason that when $M=2$ the codebook tree is a typical binary tree, and the number of antennas is powers of two, which is generally used in antenna array design. Nevertheless, extending the proposed method to other values of $M$ is straightforward.

\subsection{The Deactivation Approach}
 As a basis of the joint sub-array and deactivation approach, we first introduce the deactivation (DEACT) approach in this subsection to design a binary-tree codebook, which has the beam coverage shown in Fig. \ref{fig:codebook}, where there are $\log_2(N)+1$ layers with indices from $k=0$ to $k=\log_2(N)$, and the number of codewords in the $k$-th layer $N_k=2^k$. Here $N$ denotes the number of antennas of an arbitrary array. Thus, $N=N_{\rm{T}}$ at the transmitter and $N=N_{\rm{R}}$ at the receiver. Besides, we have
\begin{equation}
\begin{aligned}
&{\cal C}{\cal V}({\bf{w}}(k,n))= {\cal C}{\cal V}({\bf{w}}(k + 1,2n - 1)) \cup {\cal C}{\cal V}({\bf{w}}(k + 1,2n)),\\
&~~~~~k = 0,1,...,(\log_2(N)-1),~n = 1,2,3,...,{2^k}.
\end{aligned}
\end{equation}

\begin{figure}[t]
\begin{center}
  \includegraphics[width=\figwidth cm]{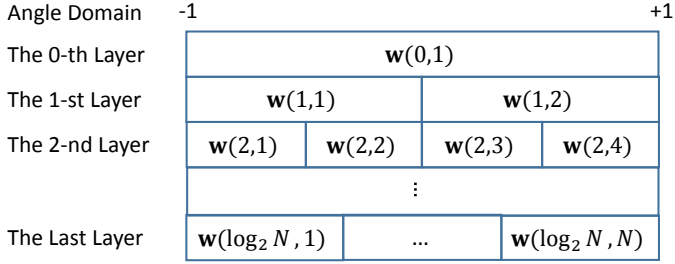}
  \caption{Beam coverage of a binary-tree structured codebook.}
  \label{fig:codebook}
\end{center}
\end{figure}

In our method, we define
\begin{equation} \label{eq_beamcover_steer}
{\cal C}{\cal V}({\bf{a}}(N,\Omega )) = \left[\Omega  - \frac{1}{N},\Omega  + \frac{1}{N}  \right],
\end{equation}
which means that the steering vectors have a beam width $2/N$ centering at the steering angle \cite{TseFundaWC}. In other words, within the beam coverage of ${\bf{a}}(N,\Omega )$, it has the maximal beam gain along the angle $\Omega$, while the minimal beam gain along the angles $\Omega\pm 1/N$. Thus, we can compute the value of $\rho$ for our codebook as
\begin{equation}
\begin{aligned}
\rho&=\left|\frac{{\bf{a}}(N,\Omega-1/N)^{\rm{H}}{\bf{a}}(N,\Omega)}{{\bf{a}}(N,\Omega)^{\rm{H}}{\bf{a}}(N,\Omega)}\right|\\
&~~~~{\rm{or}}~\left|\frac{{\bf{a}}(N,\Omega+1/N)^{\rm{H}}{\bf{a}}(N,\Omega)}{{\bf{a}}(N,\Omega)^{\rm{H}}{\bf{a}}(N,\Omega)}\right|\\
&=\frac{1}{N}\Big|\sum_{n=1}^{N}e^{{\rm{j}}\pi(n-1)/N}\Big|.
\end{aligned}
\end{equation}
Although the value of $\rho$ depends on $N$, we have $\rho\approx 0.64$ given that $N$ is large, e.g., $N\geq 8$. Even when $N$ is small, $\rho$ is still close to 0.64, e.g., $\rho=0.65$ when $N=4$.

Notice that the $N$ codewords in the last layer cover an angle range $[-1,1]$ in total, which means that all these codewords must have the narrowest beam width $2/N$ with different steering angles. In other words, the codewords in the last layer should be the steering vectors with angles evenly sampled within $[-1,1]$. Consequently, we have ${\cal C}{\cal V}({\bf{w}}(\log_2(N),n))=[ - 1 + \frac{{2n - 2}}{{{N}}}, - 1 + \frac{{2n}}{{{N}}}]$, $n=1,2,...,N$. With the beam coverage of the last-layer codewords, we can further obtain that of the codewords in the other layers in turn as an order of descending layer indices, i.e., obtain ${\cal C}{\cal V}({\bf{w}}(\log_2(N)-1,n)),~{\cal C}{\cal V}({\bf{w}}(\log_2(N)-2,n)),~...,~{\cal C}{\cal V}({\bf{w}}(0,n))$ in turn. Finally, the beam coverage of all the codewords can be uniformly written as
\begin{equation} \label{eq_beamcover_w}
\begin{aligned}
{\cal C}{\cal V}({\bf{w}}(k,n)) = [ - 1 + \frac{{2n - 2}}{{{2^k}}}, - 1 + \frac{{2n}}{{{2^k}}}],\\
k=1,2,...,\log_2 N,~n = 1,2,3,...,{2^k}.
\end{aligned}
\end{equation}

Comparing \eqref{eq_beamcover_w} with \eqref{eq_beamcover_steer}, it is clear that when
\begin{equation} \label{eq_codeword_DEACT}
{\bf{w}}(k,n)=[{\bf{a}}(2^k,-1+\frac{2n-1}{2^k})^{\rm{T}},{\bf{0}}_{(N-2^k)\times 1}^{\rm{T}}]^{\rm{T}},
\end{equation}
\eqref{eq_beamcover_w} is satisfied. This is just the deactivation approach that was proposed in \cite{he2015suboptimal}, where the number of active antennas is $2^k$ in the $k$-th layer, and the other antennas are all turned off. Fig. \ref{fig:beam_Deact} shows an example of beam pattern of the DEACT approach for the case of $N=128$. From this figure we find that the beam coverage of ${\bf{w}}(0,1)$ is just the union of those of ${\bf{w}}(1,1)$ and ${\bf{w}}(1,2)$, while the beam coverage of ${\bf{w}}(1,1)$ is just the union of those of ${\bf{w}}(2,1)$ and ${\bf{w}}(2,2)$.

\begin{figure}[t]
\begin{center}
  \includegraphics[width=\figwidth cm]{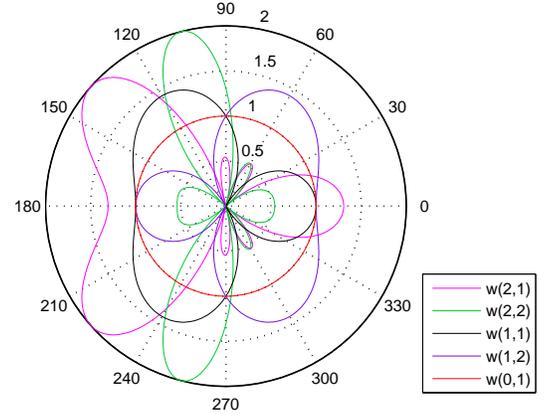}
  \caption{Beam patterns of ${\bf{w}}(2,1)$, ${\bf{w}}(2,2)$, ${\bf{w}}(1,1)$, ${\bf{w}}(1,2)$ and ${\bf{w}}(0,1)$ for the DEACT approach, where $N=128$.}
  \label{fig:beam_Deact}
\end{center}
\end{figure}

\subsection{The Joint Sub-Array and Deactivation Approach}

It is noted that for the deactivation approach, when $k$ is small, the number of active antennas is small, or even 1 when $k=0$. This greatly limits the maximal total transmission power of an mmWave device. In general, we hope the number of active antennas is as large as possible, such that higher power can be transmitted, because in mmWave communication per-antenna transmission power is usually limited. To achieve this target, we consider jointly using the sub-array and deactivation approach here. As the key of this approach is beam widening via single-RF subarray, we term it BMW-SS.

We also want to design a codebook with the beam coverage shown in Fig. \ref{fig:codebook}. According to \eqref{eq_beamcover_steer}, in the $k$-th layer, each codeword has a beam width of $2/2^k$. For the codewords of the last layer, we can also adopt the steering vectors according to \eqref{eq_codeword_DEACT}. Compared with the codewords in the last layer, those in the lower layers have wider beams, and according to \eqref{eq_beamcover_w}, codewords in the same layer have the same beam widths but different steering angles. Thus, there are two basic tasks in the codebook design, namely to rotate the beam along required directions and to broaden the beam by required factors. We first introduce beam rotation.

\subsubsection{Beam Rotation} Beam rotation can be realized according to the following theorem.

\begin{theorem}
${\cal C}{\cal V}({\bf{w}} \circ \sqrt N {\bf{a}}(N,\psi )) = {\cal C}{\cal V}({\bf{w}}) + \psi$, where $\circ$ represents entry-wise product (a.k.a. Hadamard product), $N$ is the number of elements of ${\bf{w}}$, $\psi$ is an arbitrary angle. $\mathcal{A}+\psi$ is a new angle set with elements being those of the angle set $\mathcal{A}$ added by $\psi$.
\end{theorem}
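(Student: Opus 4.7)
The plan is to recognize that entry-wise multiplication by $\sqrt{N}\mathbf{a}(N,\psi)$ amounts to applying a linear phase ramp across the array, which is exactly the modulation that produces a frequency shift in the angle variable (the classical time-shift/frequency-shift duality of the DTFT). So the statement should follow by directly computing the beam gain of the rotated codeword via the definition in \eqref{eq_beam_gain}, observing that the shift in $\Omega$ is inherited by the whole coverage set defined in \eqref{eq_beam_coverage}.

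Concretely, I would proceed in the following steps. First, write out $[\sqrt{N}\mathbf{a}(N,\psi)]_n = e^{\mathrm{j}\pi(n-1)\psi}$, so that $\mathbf{w}' := \mathbf{w}\circ\sqrt{N}\mathbf{a}(N,\psi)$ has $n$-th entry $[\mathbf{w}]_n e^{\mathrm{j}\pi(n-1)\psi}$. Plugging into \eqref{eq_beam_gain} yields
\begin{equation*}
A(\mathbf{w}',\Omega)=\sum_{n=1}^{N}[\mathbf{w}]_n e^{\mathrm{j}\pi(n-1)\psi}e^{-\mathrm{j}\pi(n-1)\Omega}=A(\mathbf{w},\Omega-\psi),
\end{equation*}
so in particular $|A(\mathbf{w}',\Omega)|=|A(\mathbf{w},\Omega-\psi)|$ for every $\Omega$. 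Second, taking suprema on both sides gives $\max_\omega |A(\mathbf{w}',\omega)|=\max_{\omega'}|A(\mathbf{w},\omega')|$, since the map $\omega\mapsto\omega-\psi$ is a bijection on the real line. Third, substitute these two identities into the defining inequality \eqref{eq_beam_coverage} for $\mathcal{CV}(\mathbf{w}')$; after the change of variable $\Omega'=\Omega-\psi$, the set on the right is exactly $\{\psi+\Omega':\Omega'\in\mathcal{CV}(\mathbf{w})\}=\mathcal{CV}(\mathbf{w})+\psi$.

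There is no real obstacle in the algebra; the only subtlety I would flag is the implicit $2$-periodicity of the angle variable. Since $\mathbf{a}(N,\cdot)$ satisfies $\mathbf{a}(N,\Omega)=\mathbf{a}(N,\Omega+2)$, the beam gain is also $2$-periodic in $\Omega$, and the identification of the ``angle domain'' with $[-1,1]$ is really an identification with $\mathbb{R}/2\mathbb{Z}$. Consequently the set addition ``$+\psi$'' must be interpreted modulo $2$ whenever the shift pushes a point out of $[-1,1]$; but because the shift identity $|A(\mathbf{w}',\Omega)|=|A(\mathbf{w},\Omega-\psi)|$ holds on all of $\mathbb{R}$, this periodic wrap-around is automatic and does not alter the argument. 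With this caveat noted, the three steps above complete the proof.
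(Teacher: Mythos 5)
Your proposal is correct and follows essentially the same route as the paper's Appendix A: compute $A(\mathbf{w}\circ\sqrt{N}\mathbf{a}(N,\psi),\Omega)=A(\mathbf{w},\Omega-\psi)$ from the definition of beam gain, note the maximum is shift-invariant, and translate the coverage set by the change of variable $\Omega=\Omega_0+\psi$. Your added remark on the $2$-periodic (mod-$2$) interpretation of the angle shift is a reasonable clarification but not needed beyond what the paper does.
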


The proof is referred to Appendix A, and this theorem can be used not only for the BMW-SS approach, but also for other codebook design methods.

Theorem 1 implies that given an arbitrary codeword ${\bf{w}}$, we can rotate its beam coverage ${\cal C}{\cal V}({\bf{w}})$ by $\psi$ with ${\bf{w}} \circ \sqrt N {\bf{a}}(N,\psi )$. This theorem helps to design all the other codewords in the same layer once one codeword in this layer is found. To explain this, we need to emphasize that all the codewords in the same layer have the same beam widths but different steering angles according to \eqref{eq_beamcover_w}, which means that the beam coverage of all the codewords can be assumed to have the same shape but different offsets in the angle domain. Thus, we can obtain one codeword based on another in the same layer as long as we know the angle gap between them according to Theorem 1. In particular, suppose we find the first codeword in the $k$-th layer ${\bf{w}}(k,1)$. According to \eqref{eq_beamcover_w}, we do know that the angle gap between the $n$-th codeword in the $k$-th layer, i.e., ${\bf{w}}(k,n)$, and ${\bf{w}}(k,1)$ is $\frac{2n-2}{2^k}$, $n=2,3,...,2^k$. Then we can obtain the all the other codewords in this layer based on ${\bf{w}}(k,1)$ according to Theorem 1 (see Corollary 1 below).

\begin{corollary}
Given the first codeword in the $k$-th layer ${\bf{w}}(k,1)$, all the other codewords in the $k$-th layer can be found through rotating ${\bf{w}}(k,1)$ by $\frac{2n-2}{2^k}$, $n=2,3,...,2^k$, respectively, i.e., ${\bf{w}}(k,n)={\bf{w}}(k,1) \circ \sqrt N {\bf{a}}(N, \frac{2n-2}{2^k})$.
\end{corollary}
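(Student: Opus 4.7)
The plan is to reduce the corollary directly to Theorem 1, which already does the heavy lifting of relating Hadamard multiplication by a steering vector to a translation of the beam coverage. First I would recall the specification of layer $k$ given in \eqref{eq_beamcover_w}: every codeword $\mathbf{w}(k,n)$ is characterized by its beam coverage $\mathcal{CV}(\mathbf{w}(k,n)) = [-1+\tfrac{2n-2}{2^k},\,-1+\tfrac{2n}{2^k}]$. A one-line arithmetic check shows that these intervals are all shifts of the first one, namely
\begin{equation*}
\mathcal{CV}(\mathbf{w}(k,n)) = \mathcal{CV}(\mathbf{w}(k,1)) + \frac{2n-2}{2^{k}}, \qquad n=2,3,\ldots,2^{k}.
\end{equation*}

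Next I would invoke Theorem 1 with $\psi = (2n-2)/2^{k}$ applied to the base codeword $\mathbf{w}(k,1)$. The theorem gives
\begin{equation*}
\mathcal{CV}\!\left(\mathbf{w}(k,1) \circ \sqrt{N}\,\mathbf{a}\!\left(N,\tfrac{2n-2}{2^{k}}\right)\right) = \mathcal{CV}(\mathbf{w}(k,1)) + \frac{2n-2}{2^{k}},
\end{equation*}
which by the previous display is exactly $\mathcal{CV}(\mathbf{w}(k,n))$. Hence the constructed vector meets the coverage specification \eqref{eq_beamcover_w} that defines $\mathbf{w}(k,n)$, and may be taken as the $n$-th codeword of layer $k$.

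The only subtle point worth flagging is that the coverage requirement does not pin down a codeword uniquely (many AWVs can share the same $\mathcal{CV}$), so the corollary should be read as an explicit construction that produces a valid choice of $\mathbf{w}(k,n)$ from $\mathbf{w}(k,1)$, rather than a statement of uniqueness. Apart from that, there is no real obstacle: once Theorem 1 is in hand, the corollary is essentially its specialization to the arithmetic progression of offsets $\{(2n-2)/2^{k}\}_{n=1}^{2^{k}}$ dictated by the binary-tree structure of the codebook.
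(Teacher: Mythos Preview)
Your approach is essentially the same as the paper's: both reduce the coverage statement to Theorem~1 applied with $\psi=(2n-2)/2^{k}$. The arithmetic you give matches the paper's computation line for line.

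However, there is one genuine omission. The paper's proof verifies \emph{two} things about the candidate vector $\mathbf{w}(k,1)\circ\sqrt{N}\,\mathbf{a}(N,\tfrac{2n-2}{2^{k}})$: first, that its beam coverage equals the required interval (which you handle), and second, that it belongs to the admissible set $\mathcal{W}(N)$ from \eqref{eq_setW}. The latter is not automatic: a codeword must have every entry either zero or of the common constant modulus, because the hardware only supports phase shifters plus on/off switches. The paper checks this by observing that $[\sqrt{N}\,\mathbf{a}(N,\psi)]_i$ is unimodular, so Hadamard multiplication preserves $|[\mathbf{w}(k,1)]_i|$ entrywise, whence $\mathbf{w}(k,1)\in\mathcal{W}(N)$ implies the rotated vector is also in $\mathcal{W}(N)$. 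Your proposal never mentions this feasibility constraint; without it, you have only shown the vector has the right coverage, not that it is a legitimate codeword in the analog-beamforming sense the paper requires. Adding this one-line modulus argument would make your proof complete and identical in substance to the paper's.
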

\begin{proof}
To prove this corollary, we need to prove that, according to \eqref{eq_beamcover_w}, when ${\bf{w}}(k,n)={\bf{w}}(k,1) \circ \sqrt N {\bf{a}}(N, \frac{2n-2}{2^k})$, ${\bf{w}}(k,n)\in\mathcal{W}(N)$ and ${\cal C}{\cal V}({\bf{w}}(k,n))=[-1+\frac{2n-2}{2^k},-1+\frac{2n}{2^k}]$.

Since
\begin{equation}
\begin{aligned}
&[{\bf{w}}(k,n)]_i=[{\bf{w}}(k,1) \circ \sqrt N {\bf{a}}(N,\frac{2n-2}{2^k} )]_i\\
=&[{\bf{w}}(k,1)]_ie^{{\rm{j}}\pi(n-1)\frac{2n-2}{2^k}},
\end{aligned}
\end{equation}
we have $|[{\bf{w}}(k,n)]_i|=|[{\bf{w}}(k,1)]_i|$. As ${\bf{w}}(k,1)\in\mathcal{W}(N)$, ${\bf{w}}(k,n)\in\mathcal{W}(N)$.

In addition, ${\bf{w}}(k,1)$ has a beam coverage $[-1,-1+\frac{2}{2^k}]$. According to Theorem 1,
\begin{equation}
\begin{aligned}
&{\cal C}{\cal V}({\bf{w}}(k,n))\\
=&{\cal C}{\cal V}({\bf{w}}(k,1) \circ \sqrt N {\bf{a}}(N,\frac{2n-2}{2^k} ))\\
=&{\cal C}{\cal V}({\bf{w}}(k,1))+\frac{2n-2}{2^k}\\
=&[-1,-1+\frac{2}{2^k}]+\frac{2n-2}{2^k}\\
=&[-1+\frac{2n-2}{2^k},-1+\frac{2n}{2^k}].
\end{aligned}
\end{equation}
\end{proof}

\subsubsection{Beam Broadening}
The remaining task is to broaden the beam for each layer. Given an $N$-element array, generally we would expect a beam width of $2/N$. Nevertheless, this beam width is in fact achieved by concentrating the transmission power at a specific angle $\Omega_0$, i.e., by selecting AWV to maximize $|A({\bf{w}},\Omega_0)|$. Intuitively, if we design the AWV to disperse the transmission power along different widely-spaced angles, the beam width can be broadened. More specifically, if a large antenna array is divided into multiple sub-arrays, and these sub-arrays point at sufficiently-spaced directions, a wider beam can be shaped.

To illustrate this, let us separate the $N$-antenna array into $M$ sub-arrays with $N_{\rm{S}}$ antennas in each sub-array, which means $N=MN_{\rm{S}}$. In addition, letting ${{\bf{f}}_m} = {[{\bf{w}}]_{(m - 1){N_{\rm{S}}} + 1:m{N_{\rm{S}}}}}$, we have ${[{{\bf{f}}_m}]_n} = {[{\bf{w}}]_{(m - 1){N_{\rm{S}}} + n}}$, $m=1,2,...,M$. ${{\bf{f}}_m}$ can be seen as the sub-AWV of the $m$-th sub-array. Therefore, the beam gain of ${\bf{w}}$ writes
\begin{equation}
\begin{aligned}
A({\bf{w}},\omega )& = \sum\limits_{n = 1}^N {{{[{\bf{w}}]}_n}{e^{ - {\rm{j}}\pi (n - 1)\omega }}} \\
 &= \sum\limits_{m = 1}^M {\sum\limits_{n = 1}^{{N_{\rm{S}}}} {{{[{\bf{w}}]}_{(m - 1){N_{\rm{S}}} + n}}{e^{ - {\rm{j}}\pi ((m - 1){N_{\rm{S}}} + n - 1)\omega }}} } \\
 &= \sum\limits_{m = 1}^M {\sum\limits_{n = 1}^{{N_{\rm{S}}}} {{e^{ - {\rm{j}}\pi (m - 1){N_{\rm{S}}}\omega }}{{[{{\bf{f}}_m}]}_n}{e^{ - {\rm{j}}\pi (n - 1)\omega }}} }\\
& = \sum\limits_{m = 1}^M {{e^{ - {\rm{j}}\pi (m - 1){N_{\rm{S}}}\omega }}A({{\bf{f}}_m},\omega )},
\end{aligned}
\end{equation}
which means that the beam gain of ${\bf{w}}$ can be seen as the union of those of ${{\bf{f}}_m}$. According to \eqref{eq_beamcover_steer}, by assigning ${{\bf{f}}_m}=e^{{\rm{j}}\theta_m}{\bf{a}}({N_{\rm{S}}},-1+\frac{2m-1}{{N_{\rm{S}}}})$, where $e^{{\rm{j}}\theta_m}$ can be seen as a scalar coefficient with unit norm for the $m$-th sub-array, the $m$-th sub-array has beam coverage $\mathcal{CV}({{\bf{f}}_m})=[-1+\frac{2m-2}{{N_{\rm{S}}}},-1+\frac{2m}{{N_{\rm{S}}}}]$, $m=1,2,...,M$. Hence, ${\bf{w}}$ has the beam coverage
\begin{equation}
\mathcal{CV}({{\bf{w}}})=\bigcup_{m=1}^{M}\mathcal{CV}({{\bf{f}}_m})=[-1,-1+\frac{2M}{{N_{\rm{S}}}}]=[-1,-1+\frac{2M^2}{N}],
\end{equation} i.e., the beam width has been broadened by $M^2$ by using the sub-array technique, where a broadening factor $M$ comes from the number of sub-arrays, while another factor $M$ results from the reduction factor of the sub-array size.

However, in the above process, the mutual effects between different sub-arrays are not taken into account. In the case of ${{\bf{f}}_m}=e^{{\rm{j}}\theta_m}{\bf{a}}({N_{\rm{S}}},-1+\frac{2m-1}{{N_{\rm{S}}}})$, we have
\begin{equation}
\begin{aligned}
&A({\bf{w}},\omega )~\big{|}~{{\bf{f}}_m} = e^{{\rm{j}}\theta_m}{\bf{a}}({N_{\rm{S}}}, - 1 + \frac{{2m - 1}}{{{N_{\rm{S}}}}})\\
 =&\sqrt {{N_{\rm{S}}}}\sum\limits_{m = 1}^M {e^{ - {\rm{j}}\pi (m - 1){N_{\rm{S}}}\omega }}e^{{\rm{j}}\theta_m}\times \\
 &\quad\quad\quad {\bf{a}}{{({N_{\rm{S}}},\omega )}^{\rm{H}}}{\bf{a}}({N_{\rm{S}}}, - 1 + \frac{{2m - 1}}{{{N_{\rm{S}}}}}).
\end{aligned}
\end{equation}

As the steering vector has the properties that ${\bf{a}}({N_{\rm{S}}}, - 1 + \frac{{2m - 1}}{{{N_{\rm{S}}}}})^{\rm{H}}{\bf{a}}({N_{\rm{S}}}, - 1 + \frac{{2n - 1}}{{{N_{\rm{S}}}}})=0$ when $m\neq n$, the beam gain of ${{\bf{f}}_m}$ along the angle $- 1 + \frac{{2m - 1}}{{{N_{\rm{S}}}}}$ is affected little by ${{\bf{f}}_n}$. It is clear that $|A({\bf{w}}, - 1 + \frac{{2m - 1}}{{{N_{\rm{S}}}}})|=\sqrt{{N_{\rm{S}}}}$ for $m=1,2,...,M$, which means that the beam gains along angles $\omega= 1 + \frac{{2m - 1}}{{{N_{\rm{S}}}}}$ are significant.

Additionally, to reduce the beam fluctuation, it is required that the intersection points in the angle domain of these coverage regions, i.e., $\omega=- 1 + \frac{{2n}}{{{N_{\rm{S}}}}}$, $n=1,2,...,M-1$, also have high beam gain, and this can be realized by adjusting coefficients $e^{{\rm{j}}\theta_m}$. Concretely, we have \eqref{eq_subarray_dev}, where in (a) we have used the fact that ${\bf{a}}{({N_{\rm{S}}},{\omega _1})^{\rm{H}}}{\bf{a}}({N_{\rm{S}}},{\omega _2})$ is small and can be neglected when $|\omega_1-\omega_2|>2/{N_{\rm{S}}}$, in (b) we have exploited the condition that ${N_{\rm{S}}}$ is even in this paper. To maximize $|A({\bf{w}},\omega )|^2$, we face the problem
\begin{equation}
\mathop {{\rm{maximize}}}\limits_{\Delta \theta } \;~~|f({N_{\rm{S}}},\Delta \theta ){|^2},
\end{equation}
which has a solution that $\Delta\theta=(2k-\frac{{N_{\rm{S}}}-1}{{N_{\rm{S}}}})\pi$, where $k\in\mathbb{Z}$. Thus, we may choose $\theta_m=-{jm\frac{{N_{\rm{S}}}-1}{{N_{\rm{S}}}}\pi}$, which satisfies $\Delta\theta=\pi$, to reduce the fluctuation of the beam.

\begin{figure*}
\begin{equation} \label{eq_subarray_dev}
\begin{aligned}
&A({\bf{w}},\omega )~\big{|}~\big\{{{\bf{f}}_m} = {e^{{\rm{j}}{\theta _m}}}{\bf{a}}({N_{\rm{S}}}, - 1 + \frac{{2m - 1}}{{{N_{\rm{S}}}}}),~\omega  =  - 1 + \frac{{2n}}{{{N_{\rm{S}}}}}\big\} = \sqrt {{N_{\rm{S}}}} \sum\limits_{m = 1}^M {{e^{ - {\rm{j}}\pi (m - 1){N_{\rm{S}}}\omega }}{e^{{\rm{j}}{\theta _m}}}{\bf{a}}{{({N_{\rm{S}}},\omega )}^{\rm{H}}}{\bf{a}}({N_{\rm{S}}}, - 1 + \frac{{2m - 1}}{{{N_{\rm{S}}}}})} \\
\mathop  \approx \limits^{(a)}& \sqrt {{N_{\rm{S}}}} {e^{ - {\rm{j}}\pi (n - 1){N_{\rm{S}}}\omega }}{e^{{\rm{j}}{\theta _n}}}{\bf{a}}{({N_{\rm{S}}}, - 1 + \frac{{2n}}{{{N_{\rm{S}}}}})^{\rm{H}}}{\bf{a}}({N_{\rm{S}}}, - 1 + \frac{{2n - 1}}{{{N_{\rm{S}}}}}) +\sqrt {{N_{\rm{S}}}} {e^{ - {\rm{j}}\pi n{N_{\rm{S}}}\omega }}{e^{{\rm{j}}{\theta _{n + 1}}}}{\bf{a}}{({N_{\rm{S}}}, - 1 + \frac{{2n}}{{{N_{\rm{S}}}}})^{\rm{H}}}{\bf{a}}({N_{\rm{S}}}, - 1 + \frac{{2n + 1}}{{{N_{\rm{S}}}}})\\
 =& \frac{1}{{\sqrt {{N_{\rm{S}}}} }}{e^{ - {\rm{j}}\pi (n - 1){N_{\rm{S}}}\omega }}{e^{{\rm{j}}{\theta _n}}}\times\left( {\sum\limits_{i = 1}^{{N_{\rm{S}}}} {{e^{ - {\rm{j}}\pi (i - 1)/{N_{\rm{S}}}}}}  + {e^{{\rm{j}}\pi {N_{\rm{S}}}\omega }}{e^{{\rm{j}}({\theta _{n + 1}} - {\theta _n})}}\sum\limits_{i = 1}^{{N_{\rm{S}}}} {{e^{{\rm{j}}\pi (i - 1)/{N_{\rm{S}}}}}} } \right)\\
\mathop  = \limits^{(b)}& \frac{1}{{\sqrt {{N_{\rm{S}}}} }}{e^{ - {\rm{j}}\pi (n - 1){N_{\rm{S}}}\omega }}{e^{{\rm{j}}{\theta _n}}}\times\left( {\sum\limits_{i = 1}^{{N_{\rm{S}}}} {{e^{ - {\rm{j}}\pi (i - 1)/{N_{\rm{S}}}}}}  + {e^{{\rm{j}}\Delta \theta }}\sum\limits_{i = 1}^{{N_{\rm{S}}}} {{e^{{\rm{j}}\pi (i - 1)/{N_{\rm{S}}}}}} } \right)\\
\buildrel \Delta \over =& \frac{1}{{\sqrt {{N_{\rm{S}}}} }}{e^{ - {\rm{j}}\pi (n - 1){N_{\rm{S}}}\omega }}{e^{{\rm{j}}{\theta _n}}}f({N_{\rm{S}}},\Delta \theta ),
\end{aligned}
\end{equation}
\hrulefill
\end{figure*}

In summary, by using the sub-array method and setting ${{\bf{f}}_m}=e^{-jm\frac{{N_{\rm{S}}}-1}{{N_{\rm{S}}}}\pi}{\bf{a}}({N_{\rm{S}}},-1+\frac{2m-1}{{N_{\rm{S}}}})$, $m=1,2,...,M$, we obtain a codeword ${\bf{w}}$ with a beam width $\frac{2M}{N_{\rm{S}}}=\frac{2M^2}{N}$. If we jointly using the sub-array and deactivation method, we may obtain codewords with beam widths $\frac{2N_{\rm{A}}}{N_{\rm{S}}}=\frac{2MN_{\rm{A}}}{N}$ by setting as
\begin{equation} \label{eq_codeword_setting}
{{\bf{f}}_m}=\left\{
\begin{aligned}
&e^{-jm\frac{{N_{\rm{S}}}-1}{{N_{\rm{S}}}}\pi}{\bf{a}}({N_{\rm{S}}},-1+\frac{2m-1}{{N_{\rm{S}}}}), ~m=1,2,...,N_{\rm{A}},\\
&{\bf{0}}_{{N_{\rm{S}}}\times 1}, ~~m=N_{\rm{A}}+1,N_{\rm{A}}+2,...,M.
\end{aligned} \right.
\end{equation}
where $N_{\rm{A}}$ is the number of active sub-arrays.

\subsubsection{Codebook Generation}

Recall that we need to design ${\bf{w}}(k,n)$ with beam widths $2/2^k$ in the $k$-th layer.

When $k=\log_2(N)$, we have ${\bf{w}}(\log_2(N),n)={\bf{a}}(N,-1+\frac{2n-1}{N})$, $n=1,2,...,N$.

When $k=\log_2(N)-\ell$, where $\ell=1,2,...,\log_2(N)$, we obey the following procedures to compute ${\bf{w}}(k,n)$:

\begin{itemize}
  \item {Separate ${\bf{w}}(k,1)$ into $M=2^{\lfloor(\ell+1)/2\rfloor}$ sub-arrays with ${{\bf{f}}_m} = {[{\bf{w}}(k,1)]_{(m - 1){N_{\rm{S}}} + 1:m{N_{\rm{S}}}}}$, where $\lfloor\cdot\rfloor$ is the flooring integer operation, $m=1,2,...,M$;}
  \item {Set ${\bf{f}}_m$ as \eqref{eq_codeword_setting}, where $N_{\rm{A}}=M/2$ if $\ell$ is odd, and $N_{\rm{A}}=M$ if $\ell$ is even;}
  \item {According to Corollary 1, we have ${\bf{w}}(k,n)={\bf{w}}(k,1)\circ \sqrt N {\bf{a}}(N,\frac{2(n-1)}{N})$, where $n=2,3,...,2^k$;}
  \item {Normalize ${\bf{w}}(k,n)$.}
\end{itemize}

Fig. \ref{fig:beam_Jot} shows an example of the beam pattern of the BMW-SS approach in the case of $N=128$. From this figure we find that the beam coverage of ${\bf{w}}(0,1)$ is just the union of those of ${\bf{w}}(1,1)$ and ${\bf{w}}(1,2)$, while the beam coverage of ${\bf{w}}(1,1)$ is just the union of those of ${\bf{w}}(2,1)$ and ${\bf{w}}(2,2)$. Comparing the beam pattern of DEACT shown in Fig. \ref{fig:beam_Deact} with that in Fig. \ref{fig:beam_Jot}, it can be observed that although there are small-scale fluctuations for BMW-SS, the beams of BMW-SS appear flatter than those of DEACT within the covered angle.

 On the other hand, for BMW-SS all the codewords either have all the antennas activated, or have half of them activated, which shows a significant advantage over DEACT in terms of the maximal total transmission power, especially for the low-layer codewords. Fig. \ref{fig:beam_Cmp} shows the comparison of beam patterns of BMW-SS, DEACT, and the approach in \cite{alkhateeb2014channel} (termed as Sparse) with the per-antenna transmission power model, where all the weights of active antennas have a unit amplitude. From this figure, we find that BMW-SS offers much higher beam gains than DEACT due to exploiting much greater number of active antennas. In addition, for the Sparse codebook, when the number of RF chains is small, there are deep sinks within the beam coverage of the wide-beam codewords, and the sink is more severe when the number of RF chains is smaller, which are in accordance with the results in \cite{alkhateeb2014channel} (Fig. 5 therein). Clearly, if the AoD or AoA of an MPC is along the sink angle, it cannot be detected with the codeword, which results in miss detection of the MPC. By contrast, BMW-SS does not have such deep sinks.

\begin{figure}[t]
\begin{center}
  \includegraphics[width=\figwidth cm]{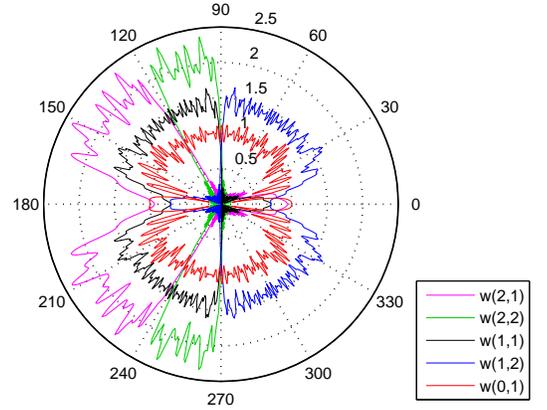}
  \caption{Beam patterns of ${\bf{w}}(2,1)$, ${\bf{w}}(2,2)$, ${\bf{w}}(1,1)$, ${\bf{w}}(1,2)$ and ${\bf{w}}(0,1)$ for the BMW-SS approach, where $N=128$.}
  \label{fig:beam_Jot}
\end{center}
\end{figure}

\begin{figure}[t]
\begin{center}
  \includegraphics[width=\figwidth cm]{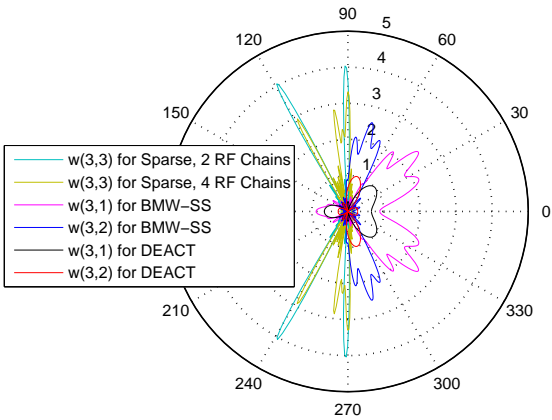}
  \caption{Comparison of the beam patterns of BMW-SS, DEACT, and the approach in \cite{alkhateeb2014channel} (termed as Sparse) with the per-antenna transmission power model, where $N=32$. $L_d=1$ for the Sparse approach.}
  \label{fig:beam_Cmp}
\end{center}
\end{figure}

It is noted that the corresponding hierarchical search of the designed codebook will eventually converge to a codeword of the last layer, i.e., a steering vector, at both ends. We can find that the angle resolution of the last layer is $2/N$. Thus, the designed codebook is just coarse codebook, while the corresponding search method is coarse search, like those in \cite{he2015suboptimal}. If a higher angle resolution is required, a fine codebook composed by steering vectors with a smaller sampling gap than $2/N$ is necessary. Details are referred to \cite{he2015suboptimal}.

\subsection{Generalization}

\subsubsection{For the Hybrid Precoding Structure}
In this paper we adopt an analog beamforming structure, and both the proposed two criteria and the BMW-SS approach are based on the analog beamforming structure. However, they are naturally feasible for the hybrid precoding structure, because the analog beamforming structure can be seen as one of the branches of the hybrid precoding structure \cite{alkhateeb2014channel}.

To realize multi-stream transmission with the hybrid precoding structure, the AoDs and AoAs of multiple MPCs need to be searched. The search process with BMW-SS based on the beamforming structure in this paper can be adopted to search the AoD and AoA of each single MPC. In fact, similar extension from one-stream transmission to multi-stream transmission has been studied for the Sparse codebook in \cite{alkhateeb2014channel}.

\subsubsection{For Other Types of Antenna Arrays}
In this paper we adopt a ULA model. There are other types of antenna arrays in practice, e.g., uniform planar array (UPA) and uniform circular array (UCA). The proposed criteria and the BMW-SS approach can be easily extended to the UPA model. In particular, for a typical 2-dimensional grid UPA with $m\times n$ elements, its steering vector can be expressed as the Kronecker product of those of two ULAs with $m\times 1$ and $n\times 1$ elements, respectively \cite{hur2013millimeter}. The search process as well as the codebook design could be extended to the UPA model and will be studied later.

On the other hand, for a UCA model, the two criteria are also feasible, but that the beam coverage in Criterion 1 should be extended to a 2-dimensional angle range, including both the azimuth and elevation angle ranges. However, the proposed BMW-SS approach can hardly be extended to the UCA model, because the relation between the elements of a steering vector changes. It would be indeed interesting to design a new codebook according to the proposed criteria with a UCA mode.

\subsubsection{For Arbitrary Number of Antenna Elements}
In this paper we require that the number of elements of a ULA ($N$) is $M^p$ for some positive integer $p$, which is because the BMW-SS approach needs to divide the array or a sub-array into $M$ smaller sub-arrays. For a ULA with an arbitrary number of elements, the sub-array technology is infeasible if $N$ is not $M$ to an integer power. Hence, the proposed approach may not be extended to with arbitrary number of antenna elements. There are two possible choices in practice. One is to select a ULA with $N$ being $M$ to an integer power when designing the system, which is reasonable because the beamforming method should be considered in system planning. The other one is to exploit BMW-SS for beamforming with $M^{\lfloor\log_M N\rfloor}$ antennas while deactivating the other ones, where $\lfloor\cdot \rfloor$ is the floor operation. Afterwards, further beam refinement can be launched with all the antennas activated.

\section{Performance Evaluation}
In this section, we evaluate the performance of the designed hierarchical codebook by the BMW-SS approach, and compare it with the alternatives. We consider two different system models on the transmission power, namely total transmission power and per-antenna transmission power, which correspond to the signal models in \eqref{eq_signal_totalp} and \eqref{eq_signal_perp}, respectively. The total transmission power signal model reflects the performance for the case that the transmission power on each antenna branch can be high enough, while the per-antenna transmission power signal model reflects the limit performance for the case that the transmission power on each antenna branch is limited\footnote{The limit performance here refers to the performance in the case that all the active antennas transmit with maximal power.}. The per-antenna transmission power model makes more sense in mmWave communication, where the output power of a single power amplifier is generally limited \cite{wang_2011_MMWCS,Xia_2011_60GHz_Tech}. The activation/deactivation operations of a codebook are irrelevant to the power models. In particular, no matter which power model is adopted, the codewords of BMW-SS either have all or half of the antennas activated, those of DEACT have varying numbers of the antennas activated and the number may be quite small, while those of Sparse always have all the antennas activated.

Besides, in the simulations, both LOS and NLOS channel models are considered based on \eqref{eq_Channel}. For LOS channel, the first MPC has a constant coefficient and random AoD and AoA, while the other NLOS MPCs have complex Gaussian-distributed coefficients and random AoDs and AoAs \cite{hur2013millimeter,xiao2015Iterative}. The LOS MPC is generally much stronger than the NLOS MPCs. For NLOS channel, all the MPCs have complex Gaussian-distributed coefficients with the same variance and random AoDs and AoAs \cite{hur2013millimeter,xiao2015Iterative,alkhateeb2014channel}. Both the LOS and NLOS channels are sparse in the angle domain, because the number of MPCs is much smaller than the numbers of the Tx/Rx antennas \cite{hur2013millimeter,xiao2015Iterative,alkhateeb2014channel}. For all the codebooks, the hierarchical search method introduced in Section III-A is used. The performances of received SNR and success rate are all averaged on the instantaneous results of $10^4$ random realizations of the LOS/NLOS channel.

\subsection{Total Transmission Power Model}
In this subsection, the total transmission power signal model shown in \eqref{eq_signal_totalp} is used. With this model, the deactivation of antennas will not affect the total transmission power, i.e., the total transmission power is the same for the involved schemes.

Fig. \ref{fig:received_power} shows the received power during each search step with the BMW-SS and DEACT codebooks under both LOS and NLOS channels, where $N_{\rm{T}}=N_{\rm{R}}=64$, $L=3$, $P_{\rm{tot}}=1$ W, and $N_0=10^{-4}$ W, i.e., the SNR for beam training is sufficiently high, which means the length of the training sequence is sufficiently long. The upper bound is achieved by the exhaustive search method. For the LOS channel, the LOS component has 15dB higher power than that of an NLOS MPC. From this figure we can find that the received-power performance of these two codebooks is similar to each other. Under both channels, at the beginning, i.e., in the first two steps, DEACT behaves slightly better than BMW-SS; while in the following steps, BMW-SS slightly outperforms DEACT, until both methods achieve the same performance after the search process, because they have the same last-layer codewords. Meanwhile, both approaches reach the upper bound under LOS channel, while achieve a performance close to the upper bound under NLOS channel. This is because under LOS channel, the LOS component is the optimal MPC, and it is acquired by all BMW-SS, DEACT and the exhaustive search. However, under NLOS channel, BMW-SS and DEACT acquire an arbitrary MPC of the $L$ NLOS MPCs, which may not be the optimal one acquired by the exhaustive search.

\begin{figure}[t]
\begin{center}
  \includegraphics[width=\figwidth cm]{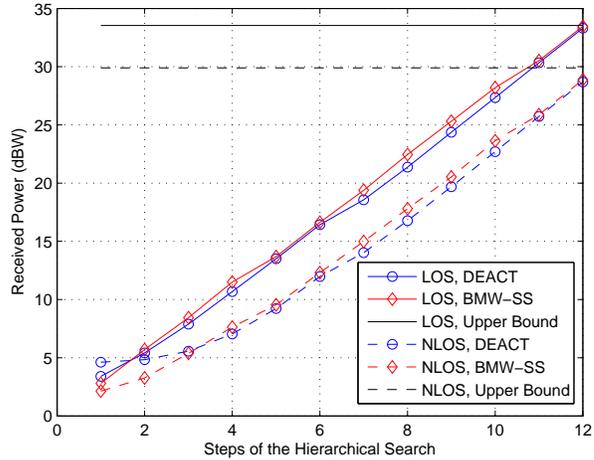}
  \caption{Received power during each search step with the BMW-SS and DEACT codebooks under both LOS and NLOS channels, where $N_{\rm{T}}=N_{\rm{R}}=64$, $L=3$, $P_{\rm{tot}}=1$ W, and $N_0=10^{-4}$ W. Step 1 to Step 6 is for Rx training, while Step 7 to Step 12 is for Tx training.}
  \label{fig:received_power}
\end{center}
\end{figure}

Fig. \ref{fig:Success_Rate_LOS} shows the success rate of hierarchical search with the BMW-SS, DEACT and Sparse (proposed in \cite{alkhateeb2014channel}) codebooks under LOS channel, where $N_{\rm{T}}=N_{\rm{R}}=64$, $L=3$. $\eta$ is the power difference in dB between the LOS component and an NLOS MPC. From this figure, it is observed that both the transmission SNR $\gamma_{\rm{tot}}$ and $\eta$ affect the success rate. For all the codebooks, the success rate improves as $\gamma_{\rm{tot}}$ increases. However, due to the mutual effect of MPCs (i.e., spatial fading), the success rate improves little when $\gamma_{\rm{tot}}$ is already high enough. Basically when $\eta$ is bigger, the mutual effect is less, and the success rate is higher. For the Sparse codebook, the performance also depends on the number of RF chains. When the number of RF chains is small, e.g., only 2, the deep sinks within the covered angle (See Fig. \ref{fig:beam_Cmp}) will sharply reduce the success rate, as shown in Fig. \ref{fig:Success_Rate_LOS}. Furthermore, we can find that the success rate with the BMW-SS codebook is higher than that with the DEACT codebook. This is because the beams of the BMW-SS codebook are flatter than those of the DEACT codebook; thus they are more robust to the spatial fading. Also, the success rate with the BMW-SS codebook is higher than that with the Sparse codebook when the number of RF chains is not large.

\begin{figure}[t]
\begin{center}
  \includegraphics[width=\figwidth cm]{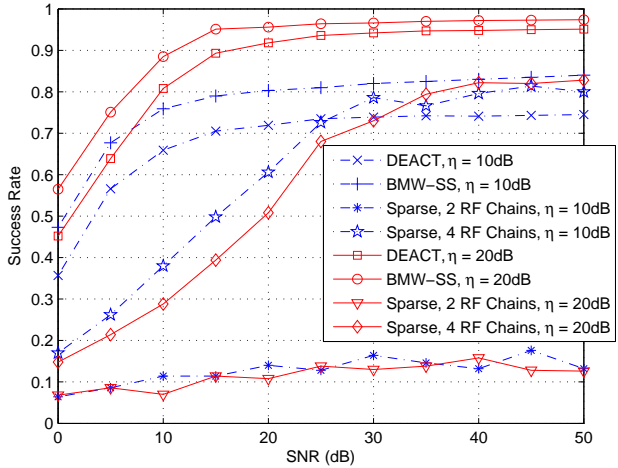}
  \caption{Success rate of hierarchical search with the BMW-SS, DEACT and Sparse codebooks under LOS channel, where $N_{\rm{T}}=N_{\rm{R}}=64$, $L=3$. $\eta$ is the power difference in dB between the LOS component and an NLOS MPC.}
  \label{fig:Success_Rate_LOS}
\end{center}
\end{figure}

Fig. \ref{fig:Success_Rate_NLOS} shows the success rate of hierarchical search with the BMW-SS, DEACT and Sparse codebooks under NLOS channel, where $N_{\rm{T}}=N_{\rm{R}}=64$. From this figure, the same performance variation with respect to the transmission SNR $\gamma_{\rm{tot}}$ can be observed as that in Fig. \ref{fig:Success_Rate_LOS}, and Sparse with 2 RF chains also has the poorest performance. In addition, BMW-SS basically outperforms DEACT and Sparse with 4 RF chains, and the superiority depends on $L$. When $L=1$, i.e., there is only one MPC, both BMW-SS and DEACT achieve a 100\% success rate when $\gamma_{\rm{tot}}$ is high enough, because there is no spatial fading. In contrast, Sparse cannot achieve a 100\% success rate even when $L=1$, due to the deep sinks within the covered angle. When $L>1$, all these schemes can hardly achieve a 100\% success rate, due to the mutual effect of multiple MPCs.

\begin{figure}[t]
\begin{center}
  \includegraphics[width=\figwidth cm]{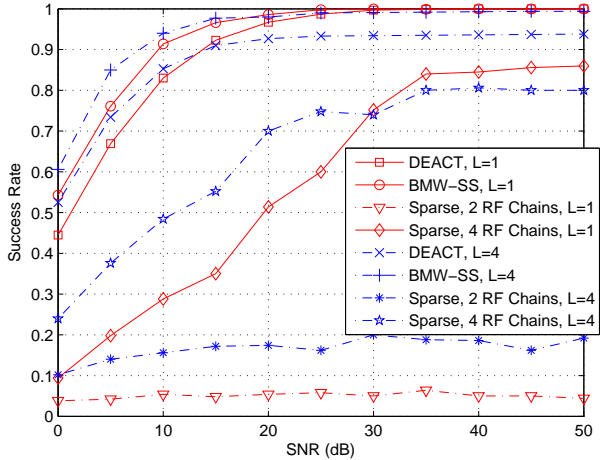}
  \caption{Success rate of hierarchical search with the BMW-SS, DEACT and Sparse codebooks under NLOS channel, where $N_{\rm{T}}=N_{\rm{R}}=64$.}
  \label{fig:Success_Rate_NLOS}
\end{center}
\end{figure}

It is noteworthy that the superiority of BMW-SS versus DEACT in Fig. \ref{fig:received_power} is different from that in Figs. \ref{fig:Success_Rate_LOS} and \ref{fig:Success_Rate_NLOS}. Fig. \ref{fig:received_power} just shows the variation of the received power during the search process with a high SNR, while Figs. \ref{fig:Success_Rate_LOS} and \ref{fig:Success_Rate_NLOS} show the search results over a wide SNR range. Fig. \ref{fig:received_power} actually corresponds to the search process for a set of points in Figs. \ref{fig:Success_Rate_LOS} and \ref{fig:Success_Rate_NLOS} with SNR equal of 40dB, and the received-power superiority of BMW-SS over DEACT in Fig. \ref{fig:received_power} will become larger if smaller SNR values are adopted.

\subsection{Per-Antenna Transmission Power Model}
In this subsection, the per-antenna transmission power signal model shown in \eqref{eq_signal_perp} is used to compare the limit performances of BMW-SS and DEACT with the same per-antenna transmission power. With this model, the deactivation of antennas will significantly affect the total transmission power. In particular, the total transmission power is lower if the number of active antennas is smaller.

Fig. \ref{fig:received_power_per} shows the received power during each search step with the BMW-SS and DEACT codebooks under both LOS and NLOS channels, where $N_{\rm{T}}=N_{\rm{R}}=64$, $L=3$, $P_{\rm{per}}=1$ W, and $N_0=10^{-4}$ W. The upper bound is achieved by the exhaustive search method. For the LOS channel, the LOS component has 15dB higher power than that of an NLOS MPC. Comparing this figure with Fig. \ref{fig:received_power}, we find a significant difference that with the per-antenna transmission power model BMW-SS has a distinct superiority over DEACT during the search process, especially at the beginning of the search process. The superiority is about 15 dB at the beginning, and it becomes less as the search goes on, until vanishes at the end of beam search, i.e., the two methods achieve the same received SNR after the search process. The superiority of BMW-SS results from the fact that the number of the active antennas for the codewords with wide beams is significantly greater than that for DEACT, and thus BMW-SS has a much higher total transmission power than DEACT when the per-antenna transmission power is the same.

 Moreover, the increasing speed of received power is the same from Step 1 to Step 6 for both of the two schemes, but from Step 7 to Step 12, the increasing speed for BMW-SS varies, and that for DEACT becomes greater than that from Step 1 to Step 6. This is because with per-antenna transmission power, there are two power gains during the search process according to \eqref{eq_power_gain_per}, namely the array gain provided by narrowing the Tx/Rx beams and the total transmission power gain provided by increasing the number of active transmit antennas. For DEACT, there is only Rx array gain from Step 1 to Step 6, where Rx training is performed, while there are both Tx array gain and total transmission power gain from Step 7 to Step 12, where Tx training is performed; thus, the increasing speed of received power is greater from Step 7 to Step 12. For BMW-SS, there is also only Rx array gain from Step 1 to Step 6 for Rx training; thus the received power consistently increases with the same speed as DEACT. But from Step 7 to Step 12 for Tx training, although the Tx beam consistently becomes narrower, which means that Tx array gain is consistently improved, the number of active antennas alternatively changes between $N_{\rm{T}}$ and $N_{\rm{T}}/2$, which means that the total transmission power may become larger or smaller. Hence, when both the Tx array gain and total transmission power increase, the received power improves with a speed the same as DEACT, while when the Tx array gain increases but the total transmission power decreases, the received SNR does not improve and may even decrease.

It is noted that the superiority of BMW-SS over DEACT at the beginning of the search process is with big significance for mmWave communication, where per-antenna transmission power is generally limited. This superiority guarantees that with the BMW-SS codebook, the success rate of beam search will be upgraded with the same transmission distance, or the transmission distance will be extended with the same success rate of beam search.

\begin{figure}[t]
\begin{center}
  \includegraphics[width=\figwidth cm]{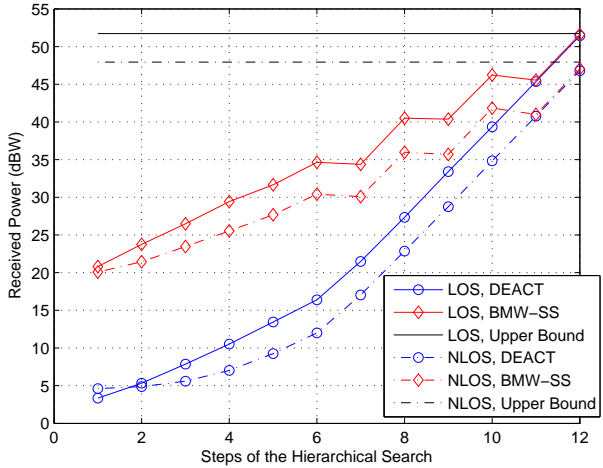}
  \caption{Received SNR during each search step with the BMW-SS and DEACT codebooks under both LOS and NLOS channels, where $N_{\rm{T}}=N_{\rm{R}}=64$, $L=3$, $P_{\rm{per}}=1$ W, and $N_0=10^{-4}$ W. Step 1 to Step 6 is for Rx training, while Step 7 to Step 12 is for Tx training.}
  \label{fig:received_power_per}
\end{center}
\end{figure}

Figs. \ref{fig:Success_Rate_LOS_per} and \ref{fig:Success_Rate_NLOS_per} show the success rates of hierarchical search with the BMW-SS and DEACT codebooks under LOS and NLOS channels, respectively. The same simulation conditions are adopted as those in Figs. \ref{fig:Success_Rate_LOS} and \ref{fig:Success_Rate_NLOS}, respectively, and the same results can be obtained from Figs. \ref{fig:Success_Rate_LOS_per} and \ref{fig:Success_Rate_NLOS_per} as those from Figs. \ref{fig:Success_Rate_LOS} and \ref{fig:Success_Rate_NLOS}, respectively, except that the superiority of BMW-SS over DEACT becomes more significant in Figs. \ref{fig:Success_Rate_LOS_per} and \ref{fig:Success_Rate_NLOS_per}, which benefits from not only the fact that the beams of the BMW-SS codebook are flatter than those of the DEACT codebook, but also that the number of the active antennas of the BMW-SS codewords is basically much greater than that of DEACT, which offers much higher total transmission power. Also, Figs. \ref{fig:Success_Rate_LOS_per} and \ref{fig:Success_Rate_NLOS_per} reveal that even with low per-antenna transmission power, the success rate of BMW-SS can be close to 100\%, which is evidently better than those of DEACT and Sparse.

\begin{figure}[t]
\begin{center}
  \includegraphics[width=\figwidth cm]{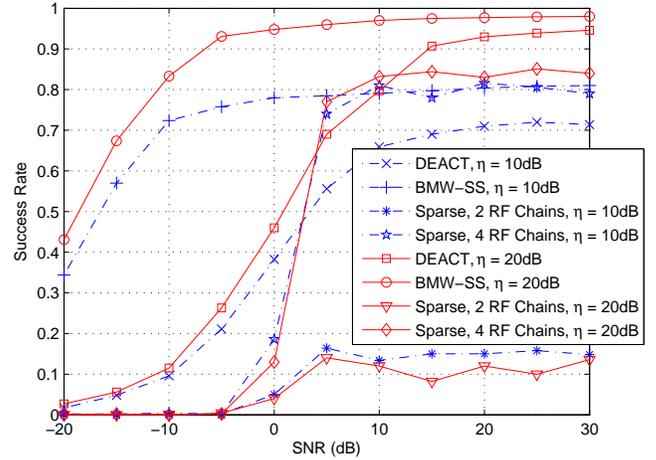}
  \caption{Success rate of hierarchical search with the BMW-SS and DEACT codebooks under LOS channel, where $N_{\rm{T}}=N_{\rm{R}}=64$, $L=3$. $\eta$ is the power difference in dB between the LOS component and an NLOS MPC.}
  \label{fig:Success_Rate_LOS_per}
\end{center}
\end{figure}

\begin{figure}[t]
\begin{center}
  \includegraphics[width=\figwidth cm]{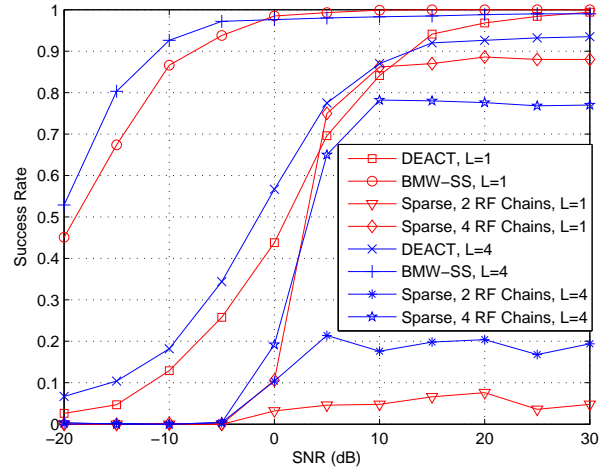}
  \caption{Success rate of hierarchical search with the BMW-SS and DEACT codebooks under NLOS channel, where $N_{\rm{T}}=N_{\rm{R}}=64$.}
  \label{fig:Success_Rate_NLOS_per}
\end{center}
\end{figure}

\section{Conclusion}
In this paper hierarchical codebook design has been studied for mmWave communication. Firstly, two basic criteria have been proposed for the codebook design. Next, a complete binary-tree structured hierarchical codebook has been designed by jointly using sub-array and deactivation techniques, i.e., the BMW-SS approach. Performance evaluation has been conducted with both a total transmission power and a per-antenna transmission power system models. Results show that the BMW-SS codebook offers two advantages over the deactivation codebook, namely flatter beams and much larger number of active antennas. Both of these two advantages basically provide performance superiorities in terms of received power during the search process and the success rate of beam search under both transmission power models, and the performance superiority is especially significant with the per-antenna transmission power system model. In addition, the BMW-SS codebook also outperforms the Sparse codebook, since there are no deep sinks within the beam coverage for BMW-SS.

\appendices

\section{Proof of Theorem 1}
Given an arbitrary $N$-element vector ${\bf{w}}$ and two arbitrary angles $\psi$ and $\Omega$, we want to prove that ${\cal C}{\cal V}({\bf{w}} \circ \sqrt N {\bf{a}}(N,\psi ))= {\cal C}{\cal V}({\bf{w}}) + \psi$, where $\mathcal{A}+\psi$ is a new angle set with elements being those of the angle set $\mathcal{A}$ added by $\psi$. Note that ${\bf{w}} \circ \sqrt N {\bf{a}}(N,\psi )$ is actually a new vector achieved based on ${\bf{w}}$ and the steering vector ${\bf{a}}(N,\psi )$. Let us first see the beam gain of this new vector.

\begin{equation} \label{eq_new_vector}
\begin{aligned}
&A({\bf{w}} \circ \sqrt N {\bf{a}}(N,\psi ),\Omega )\\
\mathop {=}\limits^{(a)}& \sqrt N {\bf{a}}{(N,\Omega )^{\rm{H}}}({\bf{w}} \circ \sqrt N {\bf{a}}(N,\psi ))\\
\mathop {=}\limits^{(b)}& \sum\limits_{n = 1}^N {{{[{\bf{w}}]}_n}{e^{{\rm{j}}\pi (n - 1)\psi }}{e^{ - {\rm{j}}\pi (n - 1)\Omega }}} \\
 =& \sum\limits_{n = 1}^N {{{[{\bf{w}}]}_n}{e^{ - {\rm{j}}\pi (n - 1)(\Omega  - \psi )}}} \\
\mathop {=}\limits^{(c)}& A({\bf{w}},\Omega  - \psi ),
\end{aligned}
\end{equation}
where (a) and (c) are according to the definition of the beam gain in \eqref{eq_beam_gain}, while (b) is obtained according to definition of the entry-wise product.

Thus, we further have
\begin{equation}
\begin{aligned}
&{\cal C}{\cal V}({\bf{w}} \circ \sqrt N {\bf{a}}(N,\psi ))\\
\mathop {=}\limits^{(a)}& \{ \Omega |\;|A({\bf{w}} \circ \sqrt N {\bf{a}}(N,\psi ),\Omega )| >\\
 &~~~~~~~~~~~~~ \rho \mathop {\max }\limits_\omega  |A({\bf{w}} \circ \sqrt N {\bf{a}}(N,\psi ),\omega )|\} \\
\mathop {=}\limits^{(b)}& \{ \Omega |\;|A({\bf{w}},\Omega  - \psi )| > \rho \mathop {\max }\limits_\omega  |A({\bf{w}},\omega  - \psi )|\} \\
\mathop {=}\limits^{(c)}& \{ \Omega |\;|A({\bf{w}},\Omega  - \psi )| > \rho \mathop {\max }\limits_\omega  |A({\bf{w}},\omega )|\} \\
\mathop {=}\limits^{(d)}& \{ {\Omega _0} + \psi |\;|A({\bf{w}},{\Omega _0})| > \rho \mathop {\max }\limits_\omega  |A({\bf{w}},\omega )|\} \\
\mathop {=}\limits^{(e)}& \{ {\Omega _0}|\;|A({\bf{w}},{\Omega _0})| > \rho \mathop {\max }\limits_\omega  |A({\bf{w}},\omega )|\}  + \psi \\
 =& {\cal C}{\cal V}({\bf{w}}) + \psi,
\end{aligned}
\end{equation}
where (a) is according to the definition of beam coverage in \eqref{eq_beam_coverage}, (b) is according to \eqref{eq_new_vector}, (c) is based on the fact that the maxima of $|A({\bf{w}},\Omega  - \psi )|$ does not depend on the angle offset $\psi$, (d) is obtained by letting $\Omega={\Omega _0} + \psi$, and (e) is obtained according to the definition of an angle set plus a single angle in Theorem 1.

\section*{Acknowledgment}

The authors would like to thank the editor and the anonymous reviewers for
their many useful and detailed comments that have helped to improve the
presentation of this manuscript. The authors would also like to thank the authors of \cite{alkhateeb2014channel} to share their source code online, and particularly thank Dr. Ahmed Alkhateeb for his kind help to explain how to use the source code.

\end{document}